\newcommand{\Rmnum}[1]{\expandafter\@slowromancap\romannumeral #1@}
\newtheorem{lemma}{Lemma}
\begin{document}

\bibliographystyle{ieeetr}

\title{Computing Resource Allocation in Three-Tier IoT Fog Networks: a Joint Optimization Approach Combining Stackelberg Game and Matching}

\author{\IEEEauthorblockN{ Huaqing Zhang\IEEEauthorrefmark{1}, Yong Xiao\IEEEauthorrefmark{6}, Shengrong Bu\IEEEauthorrefmark{2}, Dusit Niyato\IEEEauthorrefmark{4}, Richard Yu\IEEEauthorrefmark{3}, and Zhu Han\IEEEauthorrefmark{1}}\\
\IEEEauthorblockA{\IEEEauthorrefmark{1}Department of Electrical and Computer Engineering, University of Houston, Houston, TX, USA.}\\
\IEEEauthorblockA{\IEEEauthorrefmark{6}Department of Electrical and Computer Engineering at the University of Arizona, Tucson, Arizona, USA.}\\
\IEEEauthorblockA{\IEEEauthorrefmark{2}School of Engineering, University of Glasgow, UK.}\\
\IEEEauthorblockA{\IEEEauthorrefmark{4}School of Computer Science and Engineering, Nanyang Technological University, Singapore.}\\
\IEEEauthorblockA{\IEEEauthorrefmark{3}Department of Systems and Computer Engineering, Carleton University, Ottawa, ON, Canada.}}

%
%
%
%
%

\maketitle

\begin{abstract}

Fog computing is a promising architecture to provide economic and low latency data services for future Internet of things (IoT)-based network systems. It relies on a set of low-power fog nodes that are close to the end users to offload the services originally targeting at cloud data centers. In this paper, we consider a specific fog computing network consisting of a set of data service operators (DSOs) each of which controls a set of fog nodes to provide the required data service to a set of data service subscribers (DSSs). How to allocate the limited computing resources of fog nodes (FNs) to all the DSSs to achieve an optimal and stable performance is an important problem. In this paper, we propose a joint optimization framework for all FNs, DSOs and DSSs to achieve the optimal resource allocation schemes in a distributed fashion. In the framework, we first formulate a Stackelberg game to analyze the pricing problem for the DSOs as well as the resource allocation problem for the DSSs. Under the scenarios that the DSOs can know the expected amount of resource purchased by the DSSs, a many-to-many matching game is applied to investigate the pairing problem between DSOs and FNs. Finally, within the same DSO, we apply another layer of many-to-many matching between each of the paired FNs and serving DSSs to solve the FN-DSS pairing problem. Simulation results show that our proposed framework can significantly improve the performance of the IoT-based network systems. 

\end{abstract}

{\it Index Terms} {\bf --- Fog computing, Stackelberg game, matching theory, Internet of things.}

\section{Introduction}

With the rapid development of Internet of things (IoT), the number of connected devices has increased at a unprecedented speed\cite{JGubbi001}. 
It is known that analyzing the big data generated by all kinds of IoT devices requires a large amount of computing resources. In order to meet the demand of the data computing services, a large number of large-scale data centers has been deployed. In addition, cloud computing has been proposed recently to provide flexible and efficient services to the data service subscribers (DSSs). 
In cloud computing, the data service operator (DSO) is able to organize a shared pool of configurable computing resources (such as servers, storage, networks, applications and services), which can be easily accessed by DSSs on demands.

Generally speaking, large-scale data centers or clouds are generally built in remote areas which are far from the DSSs. This results in high transmission cost and service latency which can be intolerable for the IoT applications that requires real-time interaction or mobility. In order to overcome these challenges, fog computing is proposed as a promising solution. In fog computing, multiple low-power computing devices, commonly referred to as the fog nodes (FN), at the edge of the networks are deployed to offload the data computing services from the cloud. 
With the property of small-scale, low construction cost and mobility support, the FNs are generally deployed much closer to the DSSs and therefore can provide low-latency fast-response and location-awareness service\cite{IGoiri01}.

\begin{figure}
\centering
\includegraphics[scale=0.45, bb=577 369 45 726]{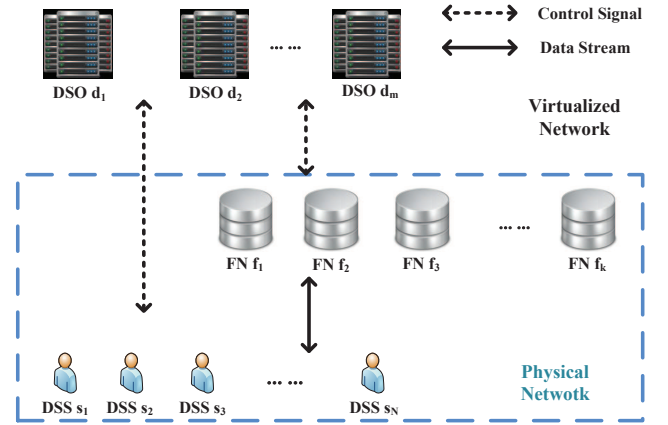}
\caption{System architecture}
\label{fig:architecture}
\end{figure}

Fog computing networks can consist of a large number of FNs deployed by different DSOs at different locations to provide various data services and applications to the DSSs. When DSSs can choose their DSOs as well as the corresponding FNs to further enhance their quality-of-experience, how to allocate the limited computing resources of all the fog nodes to the DSSs is still an open problem. In this paper, we further extend our previous work \cite{HZhang03} and focus on the resource selection and allocation problem between the FNs, DSOs and DSSs. We propose a joint optimization framework for all FNs, DSOs and DSSs in a distributed fashion. In the framework, we first formulate a Stackelberg game to model the interaction between DSOs and DSSs, where the DSOs set their service price first, and the DSSs purchase the optimal amount of computing resource blocks (CRBs) correspondingly. Once the prices of DSOs and the purchased resources of DSSs have been obtained, 
each DSS can know how many CRBs are required and can then try to compete for the CRBs owned by the nearby FNs. Thus, we propose a many-to-many matching game to investigate the interaction between DSOs and FNs where each DSO has a set of CRBs to offload, and each FN has many vacant CRBs to sell. After all the DSOs decide their DSO-FN pairs, FNs  will compete with each other to allocate their CRBs to the DSSs of the DSOs. We also adopt another layer of many-to-many matching framework to solve the FN-DSS pairing problem within the same DSO. Simulation results show that our proposed framework can significantly improve the performance of the fog computing networks. 

The rest of this paper is organized as follows. We describe the system model in Section \ref{sec:system model} and formulate the problems in Section \ref{sec:problem formulation}. Based on the formulated problem, we analyze the system with the proposed framework in Section \ref{sec:game analysis}, where the interaction between DSOs and DSSs is considered in Section \ref{sec:game analysis_DSO_vs_DSS}, the interaction between FNs and DSOs is analyzed in Section \ref{sec:game analysis_FN_vs_DSO}, and the interaction between FNs and DSSs is discussed in Section \ref{sec:game analysis_FN_vs_DSS}. Finally, we evaluate the performance of our work in Section \ref{sec:simulations}, show related works in Section \ref{sec:relatedwork} and summarize the paper in Section \ref{sec:conclusion}.

\section{System Model}\label{sec:system model}

\begin{table}

\caption{List of Notations}
\begin{tabular}{c|p{5.8 cm}}
\hline
Symbol & Definition\\
\hline
$M$ & Total number of DSOs \\
$N$ & Total number of DSSs \\
$K$ & Total number of FNs \\
$\Psi$ & The set of DSOs \\
$\Upsilon$ & The set of DSSs \\
$\Omega$ & The set of FNs \\
$\mu$ & Service rate of CRBs \\
$\lambda_{j}$ & Workload arrival rate for the DSS $s_j$ \\
$r_i$ & Price of unit virtualized CRB set by the DCO $o_i$ \\
$\mathbf{L}^s_j$ & The preference list of DSSs over DSOs \\
${t_{j}}$ & Total cost due to the delay of DSS $s_j$ \\
${h_{j}}$ & Cost due to network delay from the physical service provider to DSS $s_j$\\
$o_j$ & Cost due to queuing delay at the servers \\
$q_j$ & Total amount of CRBs purchased by the DSS $s_j$\\
$l_{kj}$ & Distance between the FN $f_k$ and the DSS $s_j$\\
$W^s_{j}$ & Utility function of DSS $s_j$ \\
$W^d_{i}$ & Utility function of DSO $d_i$ \\
$W^f_{k}$ & Utility function of FN $f_k$ \\
$\tau _{ij}$ & The boolean variable determining whether the DSO $d_i$ serves DSS $s_j$ or not.\\
$\alpha_j, \beta_j, \gamma_j$ & Weight factors in the utility function of DSS $s_j$\\
$t_{th}$ & The maximum tolerance of service delay for DSS $s_j$\\
$c_{kj}$ & Transmission cost for unit CRB from FN $f_k$ to DSS $s_j$\\
$e_{i}$ & Increment of the energy cost in the massive data center for DSO $d_i$\\
$\eta^{f}_{ki}$  & Normalized preference from the FN $f_k$ to the DSO $d_i$\\
$p_k$ & Rent of unit CRB set by the FN $f_k$\\
$r_{th}$ & Upper bound of total delay cost\\
$q_{kj}^{fs}$ &  Amount of CRBs allocated from the FN $f_k$ to the DSS $s_j$\\
$q_{ik}^{fd}$ &  Amount of CRBs allocated from the FN $f_k$ to the DSO $d_i$ \\
$\mathbf{L}^{df}_i$  & Preference list of DSO $d_i$ on all FNs \\
$\mathbf{L}^{sf}_j$  & Preference list of DSS $s_j$ on all FNs \\
$\mathbf{L}^{fs}_k$  & Preference list of FN $f_k$ on all DSSs \\
\hline
\end{tabular}\\

\end{table}

Consider a fog computing network where each DSS can submit its data computing service to a set of neighboring FNs deployed by a set of DSOs as illustrated in Figure \ref{fig:architecture}. Accordingly, we consider a three-tier fog network, where the DSOs locate in the middle layer, managing the FNs in the upper layer and serving DSSs in the bottom layer. Without loss of generality, we assume there are $M$ DSOs, labeled as $\Psi = \{d_1, d_2, \ldots, d_M\}$ and $N$ DSSs, denoted as $\Upsilon = \{s_1, s_2, \ldots, s_N\}$. 
Let ${\lambda _{j}}$ be the workload arrival rate of DSS $s_j$, $\forall s_j \in \Upsilon$. We assume each DSS has a normalized preference list, denoted as $\mathbf{L}^s_j$ over all DSOs. Moreover, $K$ FNs, labeled as $\Omega=\{f_1,f_2,\ldots,f_K\}$, locate in the area of consideration. We define the unit amount of computing resources that can be distributed by each FN as the ``computing resource block (CRB)'' \cite{HZhang03}, each of which can provide computing service at the rate of $\mu$. The physical data transmission network between FNs and DSSs satisfies the SecondNet topology \cite{CGuo01}, where the network facilities can provide the guaranteed quality-of-service (QoS) for the DSSs. Accordingly, in order to reduce the risk of potential network congestion and achieve real-time fast-response interaction, each DSO tries to offload the data services submitted by the DSSs to the large-scale data centers to the local FNs. However, as the DSSs cannot have the authorization to access the CRBs directly, the DSSs are required to receive the virtualized services from the DSOs, and with the management of DSOs, the CRBs of the FNs can finally be allocated to the DSSs. We assume that different DSOs offer data services with different requirements. Based on the preference list $\mathbf{L}^s_j$, the DSS $s_j$, $\forall s_j \in \Upsilon$ is required to subscribe to at most one DSO. The network architecture is illustrated in Fig. \ref{fig:architecture}.

Assume the DSSs apply real-time interactive applications, where QoS is measured by the service delay. In this paper, for DSS $s_j$, $\forall s_j \in \Upsilon$, we measure the cost of the service delay as

\begin{equation}
      {t_{j}} = {h_{j}}+ o_{j},
\end{equation}
which consists of the cost caused by the queuing delay $o_j$ at the servers as well as the cost caused by the network delay $h_j$ from the sensors to the physical service provider and from the physical service provider to DSS $s_j$.

According to the queuing delay model in \cite{AGan01, ZLiu01}, which can be easily extended to other models, the cost of queuing delay when serving DSS $s_j$ is

\begin{equation}\label{fun:delay}
      {o_{j}} = \frac{{{\lambda _{j}}}}{{{\mu} - \frac{{{\lambda _{j}}}}{{{q_{j}}}}}},
\end{equation}
where $q_{j}$ is the amount of CRBs purchased by DSS $s_j$. Moreover, as the network delay is related to the transmission distance, traffic condition in the network and many other unpredicted factors, in real situations, we suppose the network delay can be evaluated from training data periodically sent from sensors to the physical service provider and from the physical service provider to the DSS. In this paper, we set the distance between the farthest sensor to the physical service provider plus the distance between the physical service provider (e.g., FN $f_k$) to DSS $s_j$ is denoted as $l_{kj}$. For simplicity, we assume the network resource from the physical service provider to DSS $s_j$ is sufficient, and the cost incurred by the network delay $h_{j}$ generally follows a linear function of the distance from the sensor to the physical service provider plus the distance from the physical service provider to the DSS $s_j$, i.e., $h_{j}=\theta l_{kj}$, where $\theta$ is the scalar.

As the DSSs in the network pay DSOs for the service, following the structure of \cite{HZhang03}, the utility of DSS $s_j$, $\forall j \in \Upsilon$, can be denoted as the revenue received from the workload data minus both the cost of service delay and payment to the DSOs as follows:
\begin{equation}\label{utility_dss}
  W^s_{j} = \sum\limits_{i = 1}^M {{\tau _{ij}}} \left( {\alpha_{j} \lambda_{j} - \beta_{j} q_{j} r_{i}- \gamma_{j} t_{j}} \right),
\end{equation}
where $r_{i}$ is the price set by DSS $d_i$ for each unit of the virtualized CRB. $\alpha_{j}$, $\beta_{j}$, and $\gamma_{j}$ are weight factors. $\tau _{ij}$ is the boolean variable determining whether DSO $d_i$ serves DSS $s_j$ or not. If $\tau _{ij}=1$, DSS $s_j$ is served by the DSO $d_i$. Otherwise, DSS $s_j$ prefers to be served by other DSOs. The value of $\tau _{ij}$ follows the preference list $\mathbf{L}^s_j$ of DSS $s_j$, and since each DSS can at most choose one DSS, i.e., $\sum\limits_{i = 1}^M {{\tau _{ij}}}  = 1$, $\forall j \in \Upsilon$. For each DSS, assume there is an upper bound $t_{th}$ for the service delay. When the service delay is larger than the threshold, the DSS will regard it as an unsuccessful connection. Corresponding, we set $q_{j}^{th}$ as the lower bound of CRBs required for DSS $s_j$ to guarantee the service delay within the acceptable boundaries.

Based on the amount of virtualized CRB purchased by serving DSSs, the utility of each DSO is the revenues received from the DSSs minus the payment to the facilities that are able to provide the physical CRBs. Each DSO prefers to offload its services to the FNs nearby. However, if there are not sufficient amount of CRBs from the FNs which can meet the requirements of all DSSs. Some DSSs will be served by the massive data centers, which are located far away from the DSSs. We suppose the increment of the energy cost in the massive data center is $e_{i}$. Therefore, for the DSO $d_i$, if $q^{fd}_{ik}$ amount of CRBs are offloaded to the FN $f_k$, and $q^{o}_{i}$ amount of CRBs are still served by the massive data centers, the utility function of DSO $d_i$, $\forall i \in \Psi$, can be denoted as

\begin{equation}\label{utility_dso}
  W^d_{i} = \sum\limits_{j = 1}^N {{\tau _{ij}}({r_i}{q_j}) - \sum\limits_{k = 1}^K {{p_k}q_{ik}^{fd}}  - {e_i}q_i^o}.
\end{equation}
$p_k$ is the price set by the FN $f_k$, which is determined by the cost and current traffic of FN $f_k$.

For FN $f_k$ in the network, the utility is the payment received from the DSOs minus the transmission cost. We set $c_{kj}$ as the transmission cost for each unit CRB, which has a linear relationship with the distance $l_{kj}$. Moreover, we set $\eta^{f}_{ki}$ as the normalized preference to the DSO $d_i$. Accordingly, considering the preference to different DSOs, the discounted utility function for each DSO is

\begin{equation}\label{utility_fn}
  W^f_{k} = \sum\limits_{k = 1}^K {\eta^{f}_{ki}\left( {p_k} -  c_{kj} \right) q_{kj}^{fs}  },
\end{equation}
where $q_{kj}^{fs}$ is the amount of CRBs allocated from FN $f_k$ to DSS $s_j$.

\begin{figure}
\centering
\includegraphics[scale=0.5, bb=457 270 30 726]{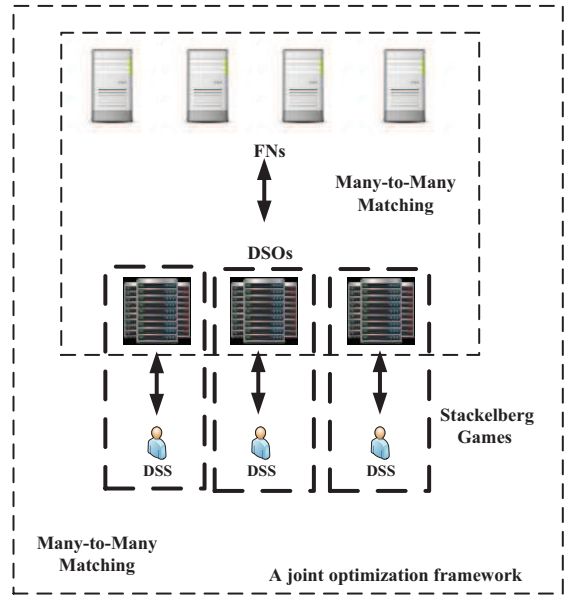}
\caption{A joint optimization framework}
\label{fig:game_model}
\end{figure}

\section{Problem Formulation}\label{sec:problem formulation}

According to the modeled architecture of fog network, with tradings between FNs and DSOs and between DSOs and DSSs, it is impossible to reach the maximum utilities for all FNs, MDCOs and DSSs simultaneously. Accordingly, we consider a sequential decision making process. During the process, the DSOs first predict the total amount of CRBs purchased from their servings DSSs and set their service prices to their DSSs based on the prediction. Observing the behaviors of the DSOs, each DSS determines the optimal amount of CRBs to purchase to achieve the maximum utility. Furthermore, knowing the total amount of required CRBs, the DSOs try to allocate CRBs to the FNs in the neighborhood. Finally, all FNs selected by the same DSO competes for the DSSs.

We can write the optimization problem can be formulated as follows.

\begin{equation}
    \begin{array}{l}
     \begin{gathered}
 \mathop {\max }\limits_{q_j} \;{\kern 1pt} \,W^s_{j} (  {q_j} | \mathbf{r} ), \qquad~~~~ \forall j \in \Upsilon,\hfill \\
  s.t. \left\{  {\begin{array}{*{20}{l}}
  {\sum\limits_{i = 1}^M {{\tau _{ij}}}  = 1}, \\
  {t_j \leq t_{th}}, \\
  {\tau _{ij} \alpha_{j} \lambda_{j} \leq \tau _{ij} \left( \beta_{j} q_{j} r_{i}+ \gamma_{j} t_{j}\right)},\\
  {q_j \geq 0}, \\
  {\tau _{ij} \in \{0,1\} , ~~~~\forall i\in \Psi},
\end{array}} \right.  \hfill
\end{gathered}
    \end{array}
    \label{opt:FN}
\end{equation}
where $\mathbf{r}$ is the pricing profile of all DSOs observed by DSS $s_j$.

Predicting the behaviors of the DSSs, each DSO is required to set the service price to each serving DSS and compete with other DSOs to choose FNs in the neighborhood. Thus, the formulated problem of the DSO is

\begin{equation}
    \begin{array}{l}
     \begin{gathered}
 \mathop {\max }\limits_{r_i} \;{\kern 1pt} \,W^d_{i} (  {r_i} | \mathbf{q}^*, \mathbf{p}, \mathbf{r}^*_{-i} ), \qquad~~~~ \forall i \in \Psi,\hfill \\
  s.t. \left\{  {\begin{array}{*{20}{l}}
  {\sum\limits_{j = 1}^N {{\tau _{ij}}({r_i}{q_j}) \geq \sum\limits_{k = 1}^K {{p_k}q_{ik}^{fd}}  + {e_i}q_i^o}}, \\
  {r_i \geq 0},
\end{array}} \right.  \hfill
\end{gathered}
    \end{array}
    \label{opt:FN}
\end{equation}
where $\mathbf{q}^*$ denotes the optimal amount of CRBs purchased by all DSSs. $\mathbf{p}$ is the profile of rent for all FNs. $\mathbf{r}^*_{-i}$ is the profile of optimal service prices set by other DSOs.

Moreover, each FN in the network would like to choose its preferred DSOs and serve its DSSs with low distance. Thus, competing with other FNs, it is required to determine the amount of CRBs allocated to DSOs and its serving DSSs, respectively. The optimization problem is denoted as

\begin{equation}
    \begin{array}{l}
     \begin{gathered}
 \mathop {\max }\limits_{q_{ik}^{fd}, q_{kj}^{fs}} \;{\kern 1pt} \,W^f_{k} (  q_{ik}^{fd}, q_{kj}^{fs} | \mathbf{q}^*, \mathbf{q}_{i{-k}}^{fd*}, \mathbf{q}_{{-k}j}^{fs*}, \mathbf{p}_{-k} ), \qquad~~~~ \forall k \in \Omega,\hfill \\
  s.t. \left\{  {\begin{array}{*{20}{l}}
  {\sum\limits_{i = 1}^M {q_{ik}^{fd}}  \leq q_k^{f-th}}, \\
  {\sum\limits_{j = 1}^N {q_{kj}^{fs}}  \leq q_k^{f-th}}, \\
  {\sum\limits_{k = 1}^K {q_{ik}^{fd}}  \leq q_i^{d-th},~~~~\forall i \in \Psi},  \\
  {\sum\limits_{k = 1}^K {q_{kj}^{fs}}  \leq q_j^{s-th},~~~~\forall j \in \Upsilon}, \\
\end{array}} \right.  \hfill
\end{gathered}
    \end{array}
    \label{opt:FN}
\end{equation}
where $\mathbf{q}_{i{-k}}^{fd}$ is the optimal amount of CRBs rent to DSO $d_i$ for all other FNs, $\mathbf{q}_{{-k}j}^{fs*}$ is the optimal amount of CRBs allocated to DSS $s_j$ for all other FNs. During the service, the total CRBs distributed to all DSOs or all DSSs cannot exceed its total available CRBs $q_k^{f-th}$ for FN $f_k$. Furthermore, the total CRBs purchased from DSO $d_i$ or DSS $s_j$ should not exceed its demand $q_i^{d-th}$ or $q_j^{s-th}$, respectively.

In summary, following the relationships among all FNs, DSOs and DSSs, we focus on the following problems:
\begin{enumerate}
  \item \textbf{Resource purchasing problem for the DSSs}: In the network, as the DSSs can only access to the DSOs in a virtualized fashion, they are required to purchase the optimal amount of CRBs from the DSOs. Following the system model, as different DSSs have different tolerance of service delay, when the upper bound of service delay is high, the DSSs are able to purchase a small amount of CRBs to achieve satisfying services. However, when the upper bound of service delay is low, the DSSs have to purchase a large amount of CRBs to guarantee the service delay is within the tolerated region. Moreover, the service price set by the DSOs also affects the utility of DSSs. When the price is in high value, even though the large amount of CRBs is able to improve the quality of data services, the DSSs have to make a large payment to the DSOs for their services. The revenue may not be satisfying. Therefore, considering both the delay tolerance and setting prices of DSOs, the optimal amount of CRBs should be determined for high utilities.
  \item \textbf{Pricing problem for the DSOs}: In the data service with fog, the DSOs are required to provide virtualized CRBs to the DSSs and try to rent the CRBs from the FNs to serve DSSs in the physical network. Therefore, how to do the pricing for the DSOs is a problem. Considering the announced rent from all FNs, the DSOs need to set a price which can bring profits for themselves. However, if the price is set too high, the serving DSSs will not purchase a large amount of CRBs. Therefore, predicting the reactions of CRBs and observing the rent of FNs, the DSOs are required to determine its service price so as to receive the maximum revenues.
  \item \textbf{DSO-FN pairing problem}: As FNs may be private computing devices, which are small-scale and unable to communicate with DSSs directly, the FNs are accessible to the DSOs only. In the multi-DSO scenario, as the FNs are accessible to all DSOs, it is a problem for all DSOs in the network to pair FNs distributedly so as to serve their DSSs with low latency. With different relations and trading history, each FN has different preference on all DSOs. Observing the rent of all FNs, each DSO also has a preferences on each FN. Based on the preferences of all DSOs' and FNs', it is required to reach a stable DSO-FN pairing results, where any DSO or FN is able switch its current pairing result for a higher utility.
  \item \textbf{FN-DSS pairing problem}: After the pairing between DSOs and FNs, each FN has allocated its CRBs to all DSOs. However, within one DSO, it is still a problem for the FNs to allocate their CRBs to all DSSs. As the distance between each FN and each DSS is various, with a longer transmission distance, the FNs have to pay more on the transmission cost. Thus, based on the transmission distance, each FN has a preference over all DSSs. Moreover, each DSS also have preference over FNs based on the rent. Therefore, following the preference of all FNs' and DSSs', a stable FN-DSS pairing result should be achieved.
\end{enumerate}

According to the formulated problems, all FNs, DSOs and DSSs are rational and autonomous individuals, who observe the behaviors of others and make decisions to improve their own utilities. Therefore, in order to reach the optimal and stable solutions for all FNs, DSOs and DSSs, we model a three-stage joint optimization framework, as shown in Fig. \ref{fig:game_model}. In the framework, we first model the Stackelberg games to solve the pricing problem for the DSOs and resource purchasing problem for the DSSs. When the DSOs know the expected amount of resource purchased by the DSSs, a many-to-many matching is proposed between the DSOs and the FNs to deal with the DSO-FN pairing problem. Finally, within the same DSO, we apply another many-to-many matching between its paired FNs and serving the DSSs to solve the FN-DSS pairing problem.

\section{System Analysis}\label{sec:game analysis}

In this section, we analyze the optimal strategies for FNs, DSOs and DSSs. 
Based on the analysis of the formulated framework, in the following sub-sections, we first investigate the interactions between the DSOs and DSSs to determine how many CRBs are required during the service. Given the optimized behaviors of the DSOs and DSSs, we analyze the interactions between the FNs and DSOs based on different preferences. Finally, with the obtained results, we discuss the interactions between the FNs and DSSs within the same DSO for better services.

\subsection{The Interaction between DSOs and DSSs}\label{sec:game analysis_DSO_vs_DSS}

In the virtualized network, the DSOs provides CRBs for the DSSs. Following the formulated problems for both DSOs and DSSs, there is a Stackelberg game, where the DSOs act as leaders and DSSs act as followers. In the game, when all DSSs choose their serving DSOs with their preferences, the DSO first sets the service price. Then, based on the price all DSSs determine optimal amount of CRBs to purchase. Accordingly, considering the optimization problem of DSSs, we have the following lemma.

\begin{lemma} \label{lemma1}
In the modeled Stackelberg game between DSO $d_i$ and DSS $s_j$, when the DSO announces its service price $r_{i}$, the optimal amount of CRBs $q_{j}$ purchased by the DSS is

\begin{equation}\label{one-to-one}
    q^{*}_{j}=\frac{{{\lambda _{j}}}}{{{\mu }\sqrt {r_{i}\frac{\beta_j}{\gamma_j}} }} + \frac{{{\lambda _{j}}}}{{{\mu}}}.
\end{equation}
\end{lemma}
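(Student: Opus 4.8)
The plan is to treat Lemma~\ref{lemma1} as a one-dimensional optimization of the follower's (DSS's) payoff for a fixed leader price $r_i$ and a fixed serving assignment, so that $\tau_{ij}=1$ for the chosen DSO and the sum in~\eqref{utility_dss} collapses to the single term $\alpha_j\lambda_j - \beta_j q_j r_i - \gamma_j t_j$. First I would substitute the delay model: since $t_j = h_j + o_j$ with $h_j = \theta l_{kj}$ independent of $q_j$ and $o_j$ given by~\eqref{fun:delay}, maximizing $W^s_j$ over $q_j$ is equivalent to minimizing
\[
 g(q_j) \;:=\; \beta_j r_i\, q_j \;+\; \gamma_j\,\frac{\lambda_j}{\mu - \lambda_j/q_j}
\]
over the queue-stability region $q_j > \lambda_j/\mu$, on which $o_j$ is positive and finite; the additive constants $\alpha_j\lambda_j$ and $\gamma_j h_j$ are irrelevant to the maximizer.

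Next I would differentiate. Writing the queuing term as $\gamma_j\lambda_j q_j/(\mu q_j - \lambda_j)$ and applying the quotient rule gives
\[
 g'(q_j) \;=\; \beta_j r_i \;-\; \frac{\gamma_j \lambda_j^2}{(\mu q_j - \lambda_j)^2}.
\]
Setting $g'(q_j)=0$ yields $(\mu q_j - \lambda_j)^2 = \gamma_j\lambda_j^2/(\beta_j r_i)$; since stability forces $\mu q_j - \lambda_j>0$, only the positive root is admissible, so $\mu q_j - \lambda_j = \lambda_j\sqrt{\gamma_j/(\beta_j r_i)}$, which rearranges to the claimed $q_j^*$ of~\eqref{one-to-one}. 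To confirm this stationary point is the maximizer rather than a saddle, I would compute $g''(q_j) = 2\mu\gamma_j\lambda_j^2/(\mu q_j-\lambda_j)^3$, which is strictly positive on the stability region; hence $g$ is strictly convex there (equivalently $W^s_j$ is strictly concave in $q_j$) and $q_j^*$ is its unique global minimizer.

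Finally I would reconcile this with feasibility: the expression in~\eqref{one-to-one} is the interior optimum, and it is the true optimum precisely when it also satisfies the delay bound $t_j\le t_{th}$ (equivalently $q_j^*\ge q_j^{th}$) together with the remaining participation and budget constraints of the DSS problem; when one of these is tight the optimum is pushed to the corresponding boundary, so the lemma is understood to describe the unconstrained-interior case. The only delicate point — and the step I would be most careful about — is this feasible-region bookkeeping: ensuring $q_j>\lambda_j/\mu$ so the queue is stable and the derivative manipulations are legitimate, and selecting the correct square-root branch; the remaining algebra is entirely routine.
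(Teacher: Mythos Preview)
Your argument is correct and follows essentially the same route as the paper: both reduce to the one-variable first-order condition for $W^s_j$ (equivalently your $g$), verify concavity via the sign of the second derivative on the stability region, and solve for the positive root to obtain~\eqref{one-to-one}. Your treatment is a bit more careful about the domain $q_j>\lambda_j/\mu$, the branch choice, and the boundary/feasibility caveats, but the method and computations match the paper's proof.
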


\begin{proof}
According to the utility function of DSS $s_j$ \ref{utility_dss}, the second derivative of $W^s_{j}$ with respect to $q_{j}$ is

\begin{equation}
   \frac{\partial ^2{W^s_{j}}}{\partial q^{2}_{j}} =  - \frac{{2\lambda _{j}^2{\mu}}}{{{{({\mu}{q_{j}} - {\lambda _{j}})}^3}}}.
\end{equation}
As $\frac{\partial ^2{W^s_{j}}}{\partial q^{2}_{j}} < 0$, $W^s_{j}$ is a quasi-concave function with respect to $q_{j}$. Furthermore, the first derivative of $W^s_{j}$ with respect to $q_{j}$ is

\begin{equation}
   \frac{{\partial {W^s_{j}}}}{{\partial {q_{j}}}} = {\left( {\frac{{\lambda _{j}}}{{{\mu}{b_{j}} - {\lambda _{j}}}}} \right)^2} - {r_{i}}\frac{\beta_j}{\gamma_j}.
\end{equation}
We set the first derivative equal to zero and obtain the optimal amount of CRBs to purchase so as to achieve the maximum utility, i.e,

\begin{equation}
    q_{j}^*=\frac{{{\lambda _{j}}}}{{{\mu }\sqrt {r_{i}\frac{\beta_j}{\gamma_j}} }} + \frac{{{\lambda _{j}}}}{{{\mu}}}.
\end{equation}
\end{proof}

Therefore, considering the reactions of the DSSs, we adjust the optimization problem for the DSO $d_i$, $\forall i \in \Psi$, as follows.

\begin{equation}
    \begin{array}{l}
     \begin{gathered}
 \mathop {\max }\limits_{r_i} \;{\kern 1pt} \,\widetilde{W}^{d}_{i} (  {r_i} | \mathbf{q}^*, \mathbf{p}, \mathbf{r}^*_{-i} ), \qquad~~~~ \forall i \in \Psi,\hfill \\
  s.t. \left\{  {\begin{array}{*{20}{l}}
  {\sum\limits_{j = 1}^N {{\tau _{ij}}(\frac{{{\lambda _{j}}}}{\mu }\sqrt {\frac{\beta_j}{\gamma_j}}{r_i}  + \frac{{{\lambda _{j}}}}{\mu }\sqrt{\frac{\beta_j}{\gamma_j}}{r_i}) \geq \sum\limits_{k = 1}^K {{p_k}q_{ik}^{fd}}  + {e_i}q_i^o}}, \\
  {r_i \geq 0},
\end{array}}  \right.  \hfill
\end{gathered}
    \end{array}
\label{opt:DSO_2}
\end{equation}
where

\begin{equation}
    \widetilde{W}^{d}_{i} = \sum\limits_{j = 1}^N {{\tau _{ij}}(\frac{{{\lambda _{j}}}}{\mu }\sqrt {\frac{\beta_j}{\gamma_j}}{r_i}  + \frac{{{\lambda _{j}}}}{\mu }\sqrt{\frac{\beta_j}{\gamma_j}{r_i}}) - \sum\limits_{k = 1}^K {{p_k}q_{ik}^{fd}}  - {e_i}q_i^o}.
\end{equation}

In the formulated problem (\ref{opt:DSO_2}), we take the first derivative of $\widetilde{W}^{d}_{i}$ with respective to $r_i$ and discover it is a monotonous increasing function with respective to $r_i$. Furthermore, as the service delay cannot surpass $t_{th}$ for the DSSs. The CRB purchased by DSSs has the following low bound

\begin{equation}\label{threshold_b}
    q_{j}\geq \frac{{{\lambda _{j}}{t_{th}}}}{{\mu {t_{th}} - {\lambda _{j}}}}.
\end{equation}
Thus, following the relation in (\ref{one-to-one}), the maximum and optimal price for the DSO $d_i$ to DSS $s_j$ is, $\forall i \in \Psi$, $\forall j \in \Upsilon$,

\begin{equation}
    r_i=\frac{{{\gamma _j}}}{{{\beta _j}}}{\left( {\frac{{\mu {t_{th}} - \lambda_j }}{\lambda_j }} \right)^2}.
\end{equation}

\subsection{The Interaction between FNs and DSOs}\label{sec:game analysis_FN_vs_DSO}

According to the predicted amount of CRBs ordered by the DSSs, the DSOs try to offload the services from the massive data centers to the DSSs nearby. Observing the service prices set by all FNs, DSO $d_i$, $\forall i \in \Psi$, has a preference list $\mathbf{L}^{df}_i=\left[{L}^{df}_{i1},\ldots,{L}^{df}_{iK}\right]$ on all FNs in the neighborhood. As the DSOs prefer to choose the FNs with a low price, we set

\begin{equation} \label{pl:df_d}
    {L}^{df}_{ik} = -p_k, ~~~~~~~~~~~` \forall i \in \Psi,~\forall k \in \Omega.
\end{equation}
Furthermore, each FN also has different preferences over DSOs. Thus, we set the preference list of FN $f_k$, $\forall k \in \Omega$, on all DSOs as $\mathbf{L}^{fd}_k=\left[{L}^{fd}_{1k},\ldots,{L}^{df}_{Mk}\right]$, satisfying,

\begin{equation}\label{pl:df_f}
    {L}^{fd}_{ik} = \eta^{f}_{ki}, ~~~~~~~~~~~` \forall i \in \Psi,~\forall k \in \Omega.
\end{equation}

Considering the preference lists of FNs and DSOs, i.e., ${L}^{df}_{ik}$ and ${L}^{fd}_{ik}$, respectively, we design a many-to-many matching algorithm for the DSO-FN pairing problem. As shown in Algorithm \ref{algorithm1}, after the preference lists are constructed, we set a pointer for each FN in its preference list. Initially, we set the pointer at the first DSO in the list. During the each round of matching, each FN first propose to the DSO in the pointer of the preference list. Based on behaviors of the FNs, each DSO chooses its most preferred FNs in its preference list until the required CRBs are all supplied by the FNs. If the FNs supply more CRBs than the DSO requires, the DSO will reject the CRBs from less favourite FNs. If the FNs supplies less CRBs than the DSO requires, the DSO will choose massive data centers for the rest of the services. At the end of each round, if all of the CRBs from the FN have been allocated to the DSOs, the pointers of the FN will keep unchanged. Otherwise, the pointers of the rejected FNs will move to the next DSO in the preference list. In the next round, the FNs which still have available CRBs will propose to the new DSOs according to their pointers. Specifically, if the CRBs of FN are chosen by the DSO in the last round, but discarded in the current round, we suppose the pointer of the FN doesn't change its position, considering the pointed DSO in the current round may need more CRBs from the FN. The matching repeats in circulations until all the pointers of the FNs have moved out their preference list. According to the algorithm, we have the following lemmas.

\begin{algorithm}[t]
\caption{Many-to-Many Matching Algorithm for DSO-FN pairing problem.}
\vspace{.1cm}
\hrule
\begin{algorithmic}[1]
\vspace{.2cm}
        \FOR{FN $f_k$}
            \STATE Construct the preference list $\mathbf{L}^{fd}_{k}$ on all DSOs according to (\ref{pl:df_f});
            \STATE One pointer is set as the indicator pointing at the first DSO in the preference list.
        \ENDFOR
        \FOR{DSO $d_i$}
            \STATE Construct the preference list $\mathbf{L}^{df}_{i}$ on all FNs according to (\ref{pl:df_d});
        \ENDFOR
        \STATE We set $flag_k$, $\forall k \in \Omega$, as the indicator to show if the CRBs of FN $f_k$ were chosen by the DSO in the last round, but discarded in the current round. Initially, $flag_k=1$;
        \WHILE{the pointers of all FNs have not scanned all the DSOs in their preference list}
            \STATE FNs propose to DSOs with their service price;
            \FOR {FN $f_k$ who still have available FNs to purchase}
                \IF {$flag_k=1$}
                    \STATE The pointer keep current position in the list;
                \ELSE
                    \STATE The pointer moves to the next position in the list;
                \ENDIF
                \STATE The FN proposes to pointed DSO in its preference list with its available FNs;
                \STATE We set $flag_k=0$;
            \ENDFOR
            \STATE DSOs determines which FNs to choose;
            \FOR{DSO $d_i$}
                \IF {The total available amount of CRBs proposed by the FNs exceed the requirements }
                    \STATE The DSO $d_i$ chooses the most preferred amount of CRBs from FNs, and rejects the rest;
                    \STATE For CRBs of the FN $f_k$ which is chosen by the DSO in the last round, but rejected in the current round, we set $flag_k=1$;
                \ENDIF
            \ENDFOR
         \ENDWHILE
\end{algorithmic}\label{algorithm1}
\hrule
\end{algorithm}

\begin{lemma} \label{lemma2}
For each FN in the matching algorithm, the pointer of the FN in its preference list moves in one direction. In other words, when its pointer has moved to the next DSOs in the preference list, whatever the matching results of other FNs, the FN cannot achieve a higher utility by moving the pointer back.
\end{lemma}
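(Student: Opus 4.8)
The statement to prove (Lemma \ref{lemma2}) is a monotonicity / deferred-acceptance-style invariant: once an FN's pointer advances past a DSO in its preference list, returning to that DSO can never yield a higher utility. My plan is to mimic the classical argument for the Gale--Shapley deferred acceptance algorithm, adapted to the many-to-many ``proposing from the FN side'' setting used in Algorithm \ref{algorithm1}. The key observation is that the pointer of FN $f_k$ advances from DSO $d_i$ to the next DSO only when $d_i$ has rejected some of the CRBs that $f_k$ proposed. So the heart of the proof is to show that once $d_i$ rejects a CRB of $f_k$, in every subsequent round $d_i$ would still reject that CRB --- i.e.\ rejections are permanent from the perspective of $f_k$.

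First I would set up the bookkeeping: fix an FN $f_k$ and a DSO $d_i$ in its preference list $\mathbf{L}^{fd}_k$, and suppose that in round $t$ the pointer of $f_k$ moves past $d_i$ because $d_i$ rejected (some of) $f_k$'s offered CRBs. By the DSO's selection rule in Algorithm \ref{algorithm1} (lines 22--26), $d_i$ at that round held a set of CRBs from FNs that are all weakly more preferred than $f_k$ under $\mathbf{L}^{df}_i$, with total supply already meeting its demand $q_i^{d-th}$. Since the DSO orders FNs by $-p_k$ (preference list (\ref{pl:df_d})) and the prices $p_k$ are fixed throughout the run, the set of FNs ``more preferred than $f_k$'' is a fixed set $S_i(k)$ that does not change over rounds. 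The next step is the monotonicity claim: the pool of CRBs available to $d_i$ from FNs in $S_i(k)$ is nondecreasing in the rounds after $t$, because an FN in $S_i(k)$ that was once accepted by $d_i$ keeps proposing to $d_i$ (the $flag_k$ mechanism on lines 12--16 and 25 precisely prevents such an FN's pointer from leaving $d_i$) and no FN ever proposes fewer CRBs than it has available. Hence in every round after $t$, $d_i$ can already satisfy its demand using CRBs weakly preferred to $f_k$, so it again rejects $f_k$'s CRBs; $f_k$'s utility contribution from $d_i$ is therefore $0$ in all those rounds, and moving the pointer back to $d_i$ cannot help.

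The final step is to convert ``rejection is permanent'' into the pointer-monotonicity statement as phrased. The pointer of $f_k$ only ever moves forward in the list by construction (lines 13--16), so the only content to verify is the parenthetical: that $f_k$ cannot gain by moving the pointer \emph{back}. If $f_k$ were allowed to re-propose to an already-passed DSO $d_i$, the argument above shows those CRBs would be rejected again, yielding no utility, while the CRBs $f_k$ withdrew from its current (later, hence less-preferred, but currently-accepting) DSOs would be lost --- a strict loss or at best no gain. I would phrase this as: any allocation reachable by moving $f_k$'s pointer backward is weakly dominated, for $f_k$, by the allocation Algorithm \ref{algorithm1} produces, because every DSO already scanned has a ``filled'' demand using FNs in $S_i(k)$.

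The main obstacle I anticipate is making the monotonicity claim --- that $d_i$'s available pool of CRBs from FNs weakly preferred to $f_k$ never shrinks --- fully rigorous in the many-to-many setting, since here a DSO may accept \emph{partial} CRBs from an FN, and an FN may be simultaneously accepted by several DSOs and rejected by others. The $flag_k$ indicator is exactly the device the authors introduced to handle the case ``accepted last round, rejected this round'' without advancing the pointer; I would need to argue carefully that this flag guarantees an FN in $S_i(k)$ never stops offering its CRBs to $d_i$ until $d_i$ itself permanently stops wanting them. Care is also needed because the demand $q_i^{d-th}$ itself depends on the Stackelberg-stage outputs $\mathbf{q}^*$, but those are fixed before Algorithm \ref{algorithm1} runs, so they are constants here --- worth stating explicitly to avoid a hidden circularity.
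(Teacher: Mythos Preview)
Your proposal is correct and follows essentially the same line as the paper's proof: both argue that once DSO $d_i$ rejects CRBs from FN $f_k$, the accepted CRBs come from FNs strictly preferred to $f_k$, and since in later rounds $d_i$ only swaps for even more preferred CRBs, the rejection of $f_k$ is permanent. Your write-up is considerably more careful---explicitly invoking the fixed prices, the $flag_k$ mechanism, and the constancy of $q_i^{d-th}$ from the Stackelberg stage---but the underlying deferred-acceptance ``rejections are permanent'' argument is identical to the paper's.
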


\begin{proof}
As shown in Algorithm \ref{algorithm2}, when the DSOs determines which FNs to choose, they choose the CRBs from the most preferred amount. If some CRBs from FN $f_k$ is discarded by DSO $d_i$, the current accepted CRBs belong to the FNs which is more preferred than FN $f_k$. In the future rounds, when there are new FNs proposing to DSO $d_i$, if the DSO would like to change its current accepted FNs, the DSO can only choose the FNs that is even better than the FNs in the current accepted list. Therefore, for FN $f_k$ which has been rejected once from DSO $d_i$, it is impossible for it to be accepted by the same DSO in the future rounds. \end{proof}

\begin{lemma} \label{lemma3}
Following the Algorithm \ref{algorithm2}, the DSO-FN pairing problem will ultimately converge and achieve a stable matching result.
\end{lemma}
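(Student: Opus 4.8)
The plan is to prove Lemma~\ref{lemma3} in two stages: first that the proposed algorithm terminates after finitely many rounds (convergence), and then that the matching it returns admits no blocking pair (stability). The convergence part rests on Lemma~\ref{lemma2}; the stability part rests on a monotonicity invariant for the set of CRBs held by each DSO, together with the fact that a DSO's acceptance rule (buy the cheapest offered CRBs until the demand $q_i^{d-th}$ is met) is responsive/substitutable, so that the classical deferred-acceptance reasoning applies.

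For convergence, I would argue as follows. Each FN $f_k$ carries a pointer that, by Lemma~\ref{lemma2}, only ever advances through its preference list $\mathbf{L}^{fd}_k$, which has length $M$. In any iteration of the while-loop in which the algorithm does not already halt, at least one FN must still have unallocated CRBs after the DSOs make their choices, and hence at the start of the next round its pointer moves forward; conversely, if no FN has leftover CRBs, the loop exits. Since there are $K$ FNs and each pointer can advance at most $M$ times, the while-loop runs at most $KM+1$ rounds, and each round performs only finite for-loops over FNs and DSOs. I would also explicitly dispose of the $flag_k$ edge case — an FN whose CRBs were accepted by a DSO, then discarded, and whose pointer therefore stays parked on that same DSO — by noting this does not create a cycle: the pointer still never moves backward, so the potential ``total pointer position'' is still monotone and bounded by $KM$.

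For stability, I would first make precise the notion of a blocking pair in this divisible many-to-many setting: a pair $(f_k,d_i)$ blocks the output allocation $\{q_{ik}^{fd}\}$ if $f_k$ strictly prefers to move some quantity of CRBs it currently allocates to a DSO it ranks below $d_i$ (or that it currently leaves idle) over to $d_i$, and simultaneously $d_i$ strictly prefers to accept those CRBs from $f_k$ in place of the same-size block of CRBs it currently holds from FNs it ranks below $f_k$ (or in place of the portion it currently offloads to the massive data center at cost $e_i$). Then I would argue by contradiction. If $(f_k,d_i)$ blocks, then $f_k$ prefers $d_i$ to some DSO it actually ends up (partly) serving or has idle capacity, so during the run $f_k$'s pointer must have advanced past $d_i$, meaning $f_k$ proposed its then-available CRBs to $d_i$ and was at least partially rejected. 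A DSO rejects CRBs from $f_k$ only when it can meet $q_i^{d-th}$ with CRBs from FNs it ranks at least as high as $f_k$, and the key invariant is that across later rounds a DSO never swaps a block of CRBs held from a higher-ranked FN for an equal-size block from a lower-ranked one — its held portfolio is monotone non-decreasing in its preference order. Hence at termination $d_i$'s allocation is at least as preferred, CRB-block by CRB-block, as the portfolio it held when it rejected $f_k$, so $d_i$ cannot strictly prefer $f_k$'s offered CRBs, contradicting that $(f_k,d_i)$ is blocking. Individual rationality (no agent prefers to drop part of its assignment) follows because DSOs only ever accept price-profitable CRBs and FNs only ever propose to DSOs remaining on their list.

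The hard part will be the stability step, for two reasons. First, I need the ``held portfolio only improves'' invariant stated and proved at the granularity of CRB \emph{quantities} rather than whole FNs, since CRBs are divisible and a DSO may take only a fraction of an FN's offer; this is where the responsiveness of the DSO's choice rule does the work and must be invoked carefully. Second, the $flag_k$ mechanism allows an FN's CRBs to be accepted, discarded, and then re-accepted by the same DSO in a later round, and I must check this does not violate the monotone-improvement invariant — it does not, because re-acceptance only occurs when the DSO still ranks $f_k$ above whatever it would otherwise displace. By contrast, convergence is comparatively routine once Lemma~\ref{lemma2} is available, being the standard preference-list-exhaustion potential argument for deferred-acceptance procedures.
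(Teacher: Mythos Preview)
Your proposal is correct and follows the same high-level architecture as the paper: convergence via Lemma~\ref{lemma2} (pointer monotonicity gives a bounded potential), and stability via the deferred-acceptance structure of the algorithm together with the substitutability of each DSO's acceptance rule.

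The main difference is one of rigor and self-containment. The paper's own proof is a brief sketch: for convergence it simply observes that pointers move one way and eventually exhaust the list; for stability it argues informally that once every FN's pointer has reached the end, neither side can unilaterally deviate to a better outcome, and then appeals to Roth's theorem that substitutable preferences guarantee a pairwise-stable matching. You instead supply the blocking-pair argument directly: you define a blocking pair at the CRB-quantity level, prove the ``DSO portfolio only improves'' invariant, and use it to derive a contradiction. This buys you a proof that does not depend on an external citation and that actually handles the divisibility of CRBs and the $flag_k$ re-acceptance mechanism, neither of which the paper addresses. Conversely, the paper's appeal to Roth is shorter and flags the structural reason (substitutable preferences) why the algorithm is a genuine deferred-acceptance variant, which your write-up mentions only in passing.
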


\begin{proof}
As proved in the Lemma \ref{lemma2}, the pointer of the FNs can only move in one direction. Therefore, in the perspective of the FNs, when the pointer of each FN has moved to the end of the preference list, the FN has distributed its available CRBs to the DSOs in an optimized way. In other words, the FN is unable to change its pairing results unilaterally for higher utilities. Furthermore, in the perspective of the DSOs, when the pointers of all FNs have moved to the end of the preference lists, each DSO has evaluated all proposals from the available FNs. Therefore, it also cannot unilaterally change its pairing results, get accepted by the FNs and achieve a higher utility for itself. Furthermore, according to \cite{Roth01}, when every agent¡¯s preference list is substitutable, a pairwise stable matching always exists. Based on the above, the DSO-FN pairing will ultimately converge and achieve a stable matching result in Algorithm \ref{algorithm2}. \end{proof}

\subsection{The Interaction between FNs and DSSs}\label{sec:game analysis_FN_vs_DSS}

When the CRBs from FNs have been rent to all DSOs, within each DSO, how to allocate the CRBs to all DSSs still remains a problem. According to the rent of all FNs, DSS $s_j$, $\forall j \in  \Upsilon$, has a preference list $\mathbf{L}^{sf}_j=\left[{L}^{sf}_{j1},\ldots,{L}^{sf}_{jK}\right]$ on all FNs in the neighborhood, satisfying,

\begin{equation} \label{pl:fs_s}
    {L}^{sf}_{kj} = -p_k, ~~~~~~~~~~~~ \forall j \in \Upsilon,~\forall k \in \Omega.
\end{equation}

Furthermore, according to the utility function of the FN, the distance between the FN and its serving DSS affect the revenues the FN received. With a longer distance, the FN has to pay more for the data transmission in the network. Therefore, we set a preference list $\mathbf{L}^{fs}_i=\left[{L}^{sf}_{j1},\ldots,{L}^{sf}_{jK}\right]$ for FN $f_k$, $\forall k \in \Omega$, over all DSSs, i.e.,

\begin{equation}\label{pl:fs_f}
    {L}^{fs}_{kj} = -l_{kj}, ~~~~~~~~~~~~ \forall k \in \Omega,~\forall j \in \Upsilon.
\end{equation}

Considering the preference lists of DSSs and FNs, i.e., ${L}^{sf}_{kj}$ and ${L}^{fs}_{kj}$, respectively, we design a many-to-many matching algorithm for the FN-DSS pairing problem within DSO $d_i$, $\forall i \in \Psi$. As shown in Algorithm $\ref{algorithm2}$, after the preference lists are constructed, we set a pointer for each DSS in its preference list. Initially, we set the pointer at the first FN in the list. During the each round of matching, each DSS first proposes to the FN in the pointer of the preference list. Based on behaviors of the DSSs, each FN chooses its most preferred DSSs in its preference list until the maximum CRBs available in the DSO $d_i$ are reached. If the DSSs request more CRBs than the FN can supply, the FN will reject the less favourite DSSs. At the end of each round, if the DSSs have been allocated all of its requested CRBs from the FNs, the pointers of the DSS will keep unchanged. Otherwise, the pointers of the rejected DSS will move to the next FN in the preference list. In the next round, the DSSs which require CRBs will propose to the new FNs according to their pointers. Specifically, if some CRBs of FN are allocated to the DSS in the last round, but changed to other DSSs in the current round, we suppose the pointer of the DSS in the next round doesn't change its position, considering the pointed FN in the current round may be able to supply more CRBs to the DSS. The matching repeats in circulations until all the pointers of the DSSs have moved out their preference list. Following Lemma \ref{lemma2} and Lemma \ref{lemma3} in a similar way, the FN-DSS pairing problem can ultimately achieve a stable matching result.

\begin{algorithm}
\caption{Many-to-Many Matching Algorithm for FN-DSS pairing problem.}
\vspace{.1cm}
\hrule
\begin{algorithmic}[1]
\vspace{.2cm}
        \FOR{DSS $s_j$}
            \STATE Construct the preference list $\mathbf{L}^{sf}_{j}$ on all DSOs according to (\ref{pl:fs_s});
            \STATE One pointer is set as the indicator pointing at the first FN in the preference list.
        \ENDFOR
        \FOR{FN $f_k$}
            \STATE Construct the preference list $\mathbf{L}^{fs}_{k}$ on all FNs according to (\ref{pl:fs_f});
        \ENDFOR
        \STATE We set $flag_j$, $\forall j \in \Upsilon$, as the indicator to show if the FNs allocate CRBs to the DSS $s_j$ in the last round, but adjusted in the current round. Initially, $flag_j=1$;
        \WHILE{the pointers of all DSSs have not scanned all the FNs in their preference list}
            \STATE DSSs propose to FNs for their services;
            \FOR {DSS $s_j$ which has not been allocated required CRBs}
                \IF {$flag_j=1$}
                    \STATE The pointer keep current position in the list;
                \ELSE
                    \STATE The pointer moves to the next position in the list;
                \ENDIF
                \STATE The DSS proposes to pointed FN in its preference list for its data services;
                \STATE We set $flag_j=0$;
            \ENDFOR
            \STATE Each FN determines which DSSs to choose;
            \FOR{FN $f_k$}
                \IF {The total available CRBs requested by the DSSs exceed the available volume }
                    \STATE The FN $f_k$ allocate the CRBs to the most preferred DSSs, and rejects the rest;
                    \STATE For CRBs allocated to the DSS $s_j$ in the last round, but adjusted in the current round, we set $flag_j=1$;
                \ENDIF
            \ENDFOR
         \ENDWHILE
\end{algorithmic}\label{algorithm2}
\hrule
\end{algorithm}

\section{Simulation Results and Discussions}\label{sec:simulations}

In this section, we present simulation results to evaluate our proposed framework with MATLAB. In the simulated IoT scenario, without specific explanation, there are 120 DSSs, 4 DSOs and 20 FNs allocated randomly in a circle district with a diameter of 10 kilometers. As we focus on the IoT scenarios where DSSs are closely located with its sensors, without loss of generality for the methods in this paper, we suppose the sensors of each DSS are located at the same position with the DSS. We follow the settings in \cite{ZLiu01} and set the service rate of each computing resource block is $0.1$ $(ms)^{-1}$. For each DSS, the workload arriving rate is a random number averaged $0.5$ $(ms)^{-1}$. The data transmission speed is $50 km/ms$. The delay tolerance of all DSSs is set to $60$ ms. Furthermore, for each FN, we set its preference to each DSO as a random number satisfying the uniform distribution between $\left[0,1\right]$. Based on the usage of its computing resources and its service cost, we set the announced rent as a random number satisfying the uniform distribution between $\left(0,10\right)$, and the amount of available CRB as a random number satisfying uniform distribution between $\left(0,100\right)$. The weight factors $\alpha$, $\beta$ and $\gamma$ are set as $50$, $0.01$ and $0.001$, respectively.

As shown in Fig. \ref{fig:numSS_utilityFN}, we evaluate the utility of all FNs when the number of DSSs increases. When the number of FNs is fixed, we discover with the number of DSSs increasing, the utility of all FNs generally increases, and the increasing speed first increases then gradually decreases to zero. The reason is that when the number of DSSs increases, but the number of FNs is fixed, the FNs will be able to serve more favourable DSSs with a low transmission distance. However, when all of the available CRBs of FNs are allocated to the DSSs nearby and the number of DSS keep increasing, the DSS will be allocated with CRBs from the massive data centers. Thus, the total utility of the FNs stop increasing. Nevertheless, when we increase the number of FNs, we discover that with more FNs, the utility can converge to a higher bound ultimately. Furthermore, because of the competition between FNs, when the number of DSS is small, with more FNs, the increasing speed is smaller.

\begin{figure}[t]
\centering
\includegraphics[scale=0.55, bb=389 0 0 303]{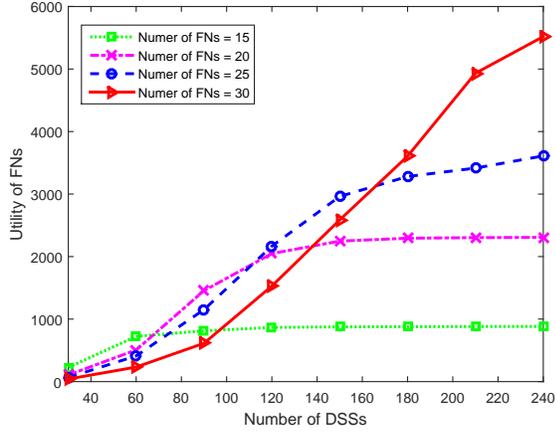}
\caption{The utility of all FNs versus the number of DSSs}
\label{fig:numSS_utilityFN}
\end{figure}

In Fig. \ref{fig:numSS_utilitySS}, we consider the utility of all DSSs with the number of DSSs increasing. In the simulation, we compare the performance of the DSSs in our proposed framework with the performance of the DSSs which is served by the massive data centers only. With the same amount of workload, we discover that when the number of DSSs increases, the utility of DSS genrally increases, and the utility with FNs achieve a higher value than the one without FNs. Furthermore, when the number of DSSs is fixed and the average workload for each DSS increases, the DSSs are able to receive more revenues from the services. Thus, the utility of DSSs increases.

\begin{figure}[t]
\centering
\includegraphics[scale=0.55, bb=389 0 0 303]{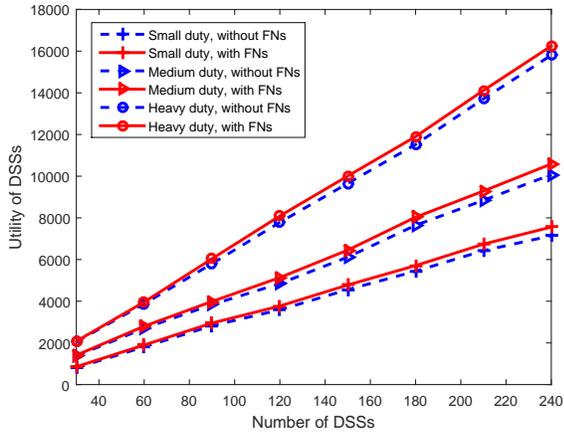}
\caption{The utility of all DSSs versus the number of DSSs}
\label{fig:numSS_utilitySS}
\end{figure}

In Fig. \ref{fig:lambda_utilityFN}, we analyze the relation between the utility of FNs and average workload arrive rate for DSSs. As shown in the figure, when the number of FNs is fixed and the average value of workload $\lambda$ increases, the utility of FNs first increases then gradually converge to a fixed value. The reason is that when the workload of all DSSs increases, the FNs are able to allocate more of its CRBs to the DSSs nearby. However, when all the available CRBs of FNs are allocated to the DSSs, the utility of the FNs stops improving and converges to one specific value. When the number of FNs increases, with the same value of $\lambda$, as the FNs are able to provide more CRBs to the DSSs, the converged value is higher.

\begin{figure}[t]
\centering
\includegraphics[scale=0.55, bb=389 0 0 303]{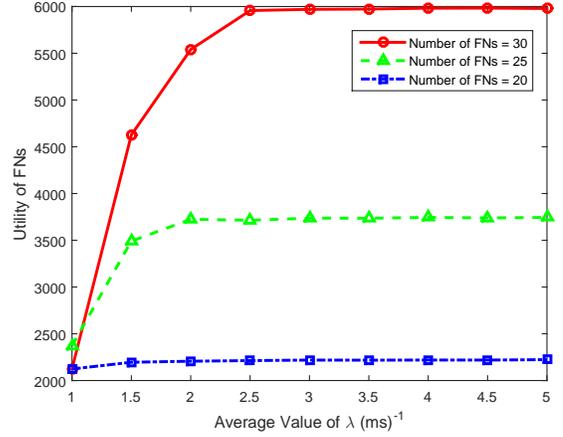}
\caption{The utility of all FNs versus the average value of $\lambda$}
\label{fig:lambda_utilityFN}
\end{figure}

In Fig. \ref{fig:utilityDSO_mu}, we observe the utility of all DSOs when the value of $\mu$ increases. As shown in the figure, when $\mu$ increases, each DSS is able to be served with a less amount of CRBs. Therefore, the DSO is able to set a higher price to the DSSs and receive high revenues. Moreover, when the number of DSSs increases, as the DSO is able to serve more DSSs at the same time, the total utility of DSOs also increases.

\begin{figure}[t]
\centering
\includegraphics[scale=0.55, bb=389 0 0 303]{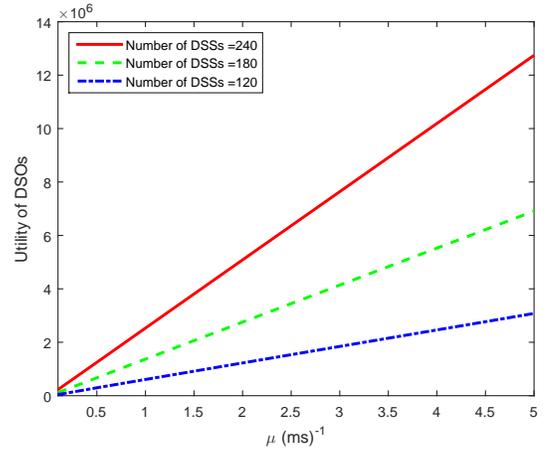}
\caption{The utility of all DCOs versus the value of $\mu$}
\label{fig:utilityDSO_mu}
\end{figure}

In Fig. \ref{fig:utility_rth}, we evaluate the relationship of utilities and the tolerance service delay $t_{th}$ of DSSs. As shown in the Fig. \ref{fig:utilityFN_rth}, when the value of $t_{th}$ increases, each DSS is able to be served with less CRBs. Thus, the FNs will supply less CRBs in the network, and the utility of the FNs generally decreases. In Fig. \ref{fig:utilityFN_rth}, as the DSO is able set a high price for its services, with the value of $t_{th}$ increasing, the utility of DSO generally increases. Moreover, in Fig. \ref{fig:utilityDSS_rth}, even though the DSS is able purchase less CRBs with high $t_{th}$, the price of CRBs set by the DSOs also increases. Furthermore, as the DSS suffers a lot with high delay, the utility of DSS generally decreases with the value of $t_{th}$ increasing.

\begin{figure*}[htbp]
\begin{minipage}[t]{0.33\linewidth}
\centering
\includegraphics[scale=0.42, bb=389 0 0 303]{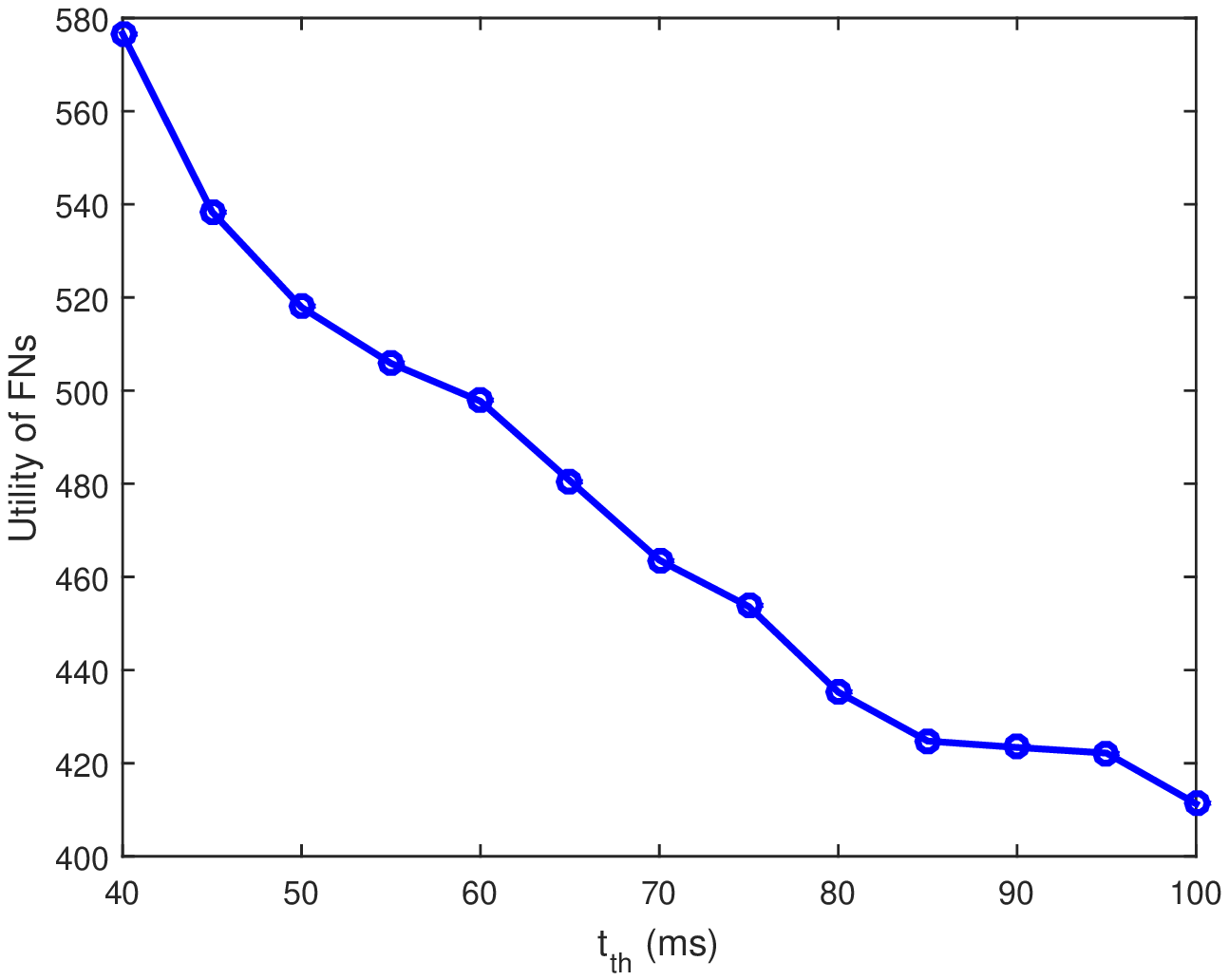}
\subcaption{The utility of all FNs versus $t_{th}$ }
\label{fig:utilityFN_rth}
\end{minipage}%
\begin{minipage}[t]{0.33\linewidth}
\centering
\includegraphics[scale=0.42, bb=389 0 0 303]{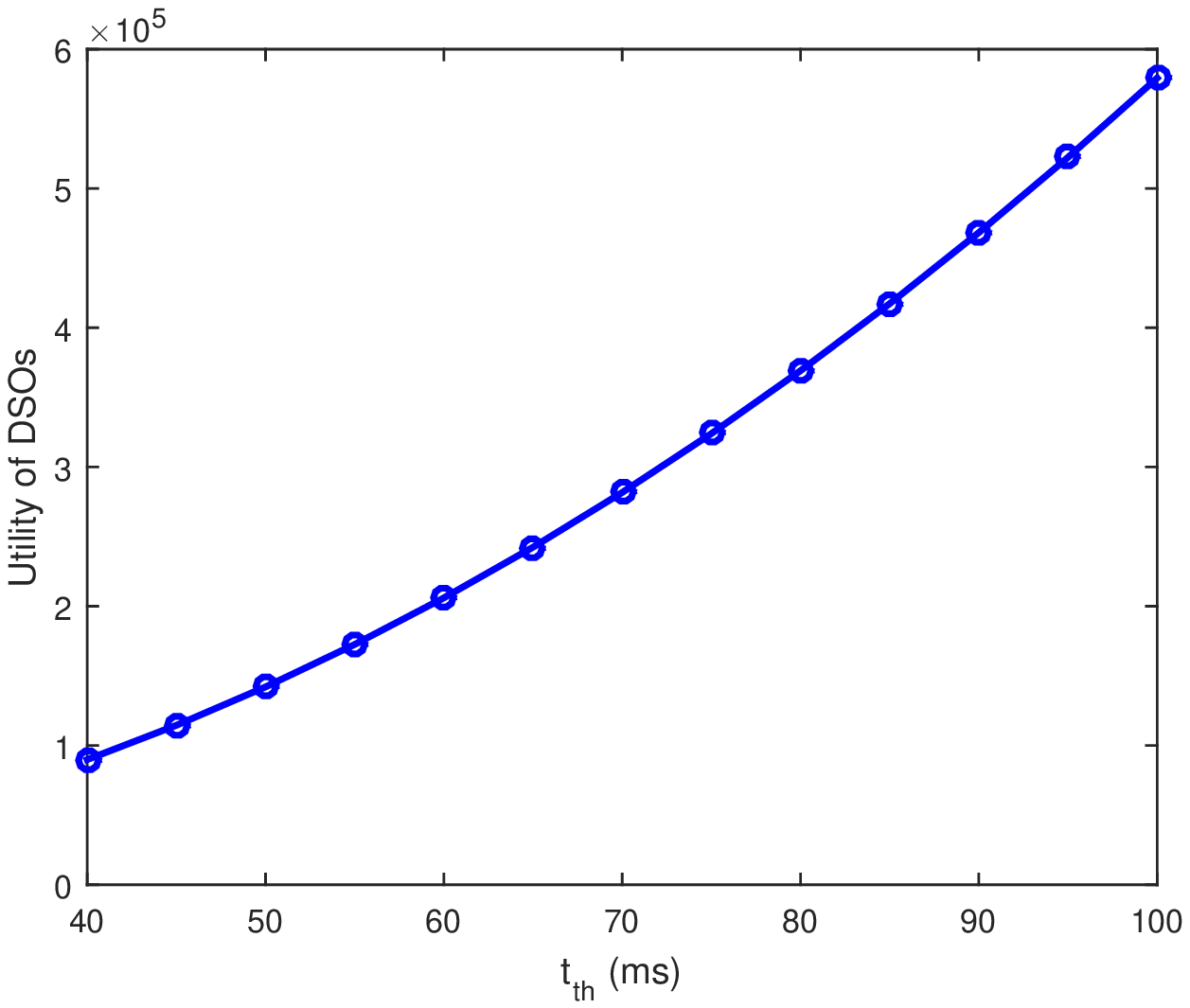}
\subcaption{The utility of all DSOs versus $t_{th}$ }
\label{fig:utilityDSO_rth}
\end{minipage}
\begin{minipage}[t]{0.33\linewidth}
\centering
\includegraphics[scale=0.42, bb=389 0 0 303]{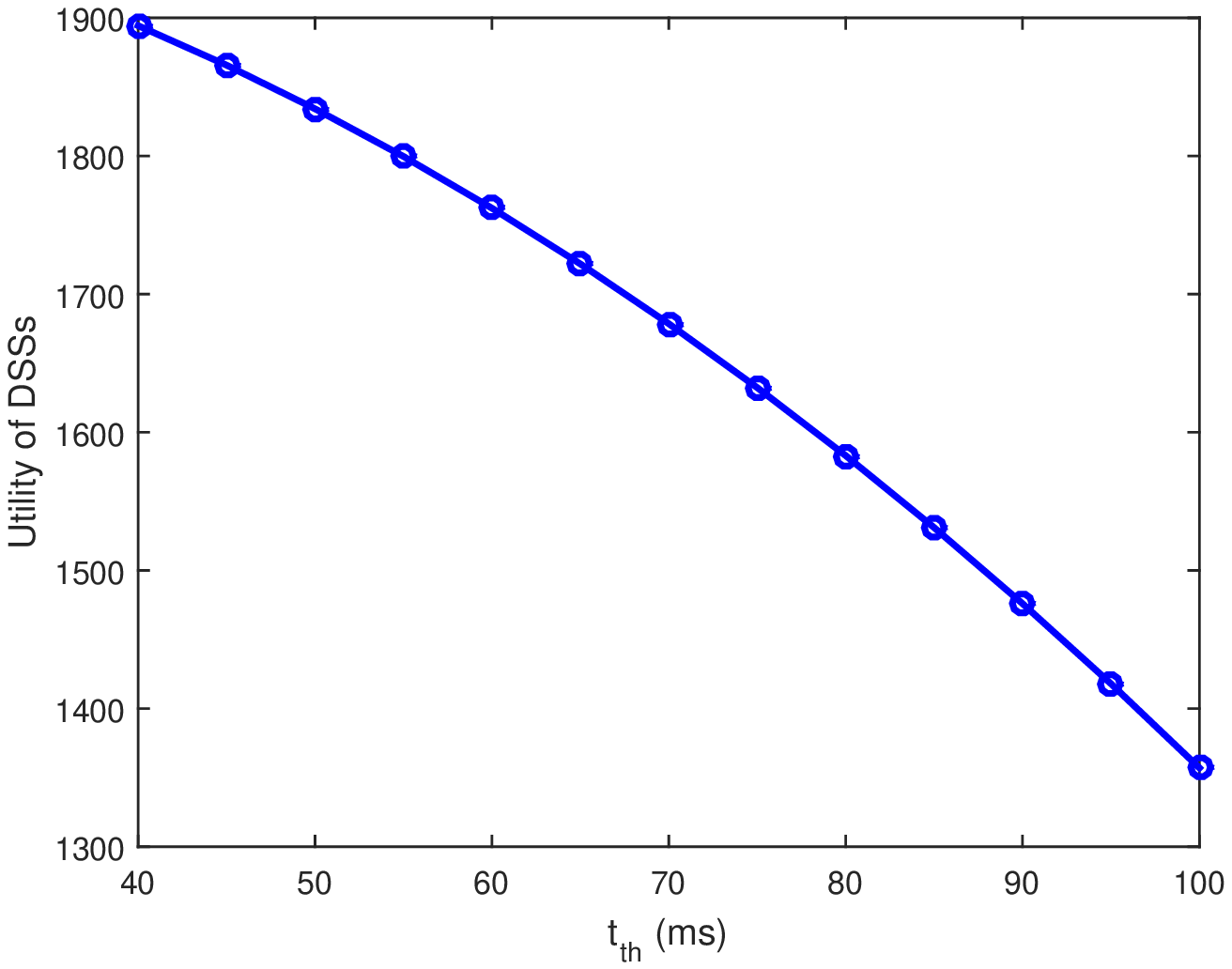}
\subcaption{The utility of all DSSs versus $t_{th}$ }
\label{fig:utilityDSS_rth}
\end{minipage}
\caption{The utilities versus $t_{th}$ }
\label{fig:utility_rth}
\end{figure*}

\section{Related Works}\label{sec:relatedwork}

In the literature, fog computing has been advocated to be the promising future of the cloud. The concept of pulling the cloud closer to the users has been widely considered in previous work. In \cite{BAhlgren01}, the authors put forward the concept of mist computing, aiming to distribute the cloud and its benefits deeply into the network. In \cite{SIslam01}, the deployment of edge cloud was proposed. From the DSSs' side, the edge cloud was able to surrogate the requirements and simplify the management of the network. From the servers' side, the edge cloud can exploit content and support service delivery in an efficient way. Without deploying massive data centers with high cost and latency, \cite{KChurch01} on the other hand strengthened the importance of small distributed data center designs. The authors took email delivery as an example and showed the advantages of geo-diversity characteristics of micro data centers. In \cite{VValancius01}, a novel and distributed computing platform, called nano data centers was proposed. The authors evaluated the energy consumption of nano data centers and showed a significant improvement on energy efficiency, compared with the traditional data centers. In \cite{MBMobley01}, considering the existing telecommunication and Internet service providers, the authors showed that it was required and beneficial to leverage the existing infrastructure and provide value-added services with FNs. In \cite{FBonomi01}, the authors outlined the vision of fog computing and overviewed the important features of fog computing. In \cite{Zjiang01}, the network optimization with fog computing was considered. As the data centers were aware of the locations of DSSs with fog computing, dynamic adaptation of computing resources to the DSSs' conditions was proposed. In \cite{Firdhous01}, the authors compared the cloud computing with the fog computing and showed some significant characteristics of fog, which was required for current data services. In \cite{Stojmenovic01}, the authors elaborated the role of fog computing in six important scenarios and surveyed the security issues with fog computing.

Moreover, fog computing has been widely considered to be beneficial for the IoT. In \cite{MChiang001}, the authors overview the opportunities and challenges of fog, especially the applications of fog computing in IoT. In \cite{MAazam001}, the authors devise the method of MEdia FOg Resource Estimation (MeFoRE), to provide resource estimation based on the service give-up ratio and to enhance QoS based on the previous QoE. \cite{SFAbedin001} addresses the utility based pairing problem between the fog nodes and IoT devices with the Irving's matching algorithm. In \cite{NKGiang001}, the authors propose a distributed dataflow programming model for IoT devices to optimize resource allocation on computing infrastructures across the fog and the cloud. In \cite{MAazam002}, the authors consider issues of resource prediction, customer type based resource estimation and reservation, advance reservation, and pricing in the fog computing for IoTs. \cite{MYannuzzi001} considers the requirements of mobility, scalability, reliable control and actuation in some challenging scenarios of IoT to show the benefits and significance of fog computing. Considering the advantages of fog computing, the authors in \cite{BMei001} discuss and propose a procedure to be implemented in smart phones for UV measurement.

In order to solve the resource management problems in a network system with a distributed fashion, game theory has been shown as a powerful tool \cite{ZHan001}. In the literature, most of the cases, the network system is normally modeled as a bipartite or a multi-tier graph. Based on this model, in \cite{CWang01}, a Stackelberg game theoretic model was shown for dynamic bandwidth allocation between virtual networks. In \cite{MWu001}, the authors consider a Stackelberg game between data center and buses in the smart city, where each buses collect data along its route and compete with other buses for the reward forwarding to the data center. In the game, following the proposed heuristic algorithm, the Stackelberg equilibrium is shown to be achieved where the data center and each bus are able to reach maximum utility. In \cite{YWang001}, the authors formulate a Stackelberg game for power allocation of data centers in the cloud. In the game, the power grid controller acts as the leader and sets prices of the provided energy based on the current amount of renewable energy and costs. Observing the prices, the cloud controller, i.e., the follower, determines the optimal amount of enery to purchase and do resource allocation for its data centers. With backward induction, the near-optimal strategies of both players in the game can be achieved. In \cite{HWang001}, the authors model the interaction between the monopolistic data center opertor and the customers as a Stackelberg game. In the game, the pricing strategies of the monopolistic data center operator the corresponding behavior of data service customers is detailedly analyzed in both homogeneous and heterogeneous customer scenarios. In \cite{BYang01}, the authors adopted the Stackelberg game to solve the problem of minimizing energy consumption in the data center networks. In \cite{HZhang01}, the authors proposed a multi-leader multi-follower Stackelberg game to address and cooperation problems among Wi-Fi, small cell and macrocell networks. In \cite{HZhang02}, the authors combined the Stackelberg game and the bargaining to design a resource allocation problem in a multi-tier LTE unlicensed network. Furthermore, the auction mechanisms are also powerful tools to solve the problem. In \cite{LMAusubel02, GDemange01, DMishra01}, the resource management problem could be perfectly optimized, but it requires high communication and computation overhead. In \cite{RNarayanam01}, the authors adopted the generic game theoretic framework to identify important edges in the context of k-edge connectivity between certain pairs of nodes in a general given network. In \cite{MKearns01}, the graphical game was put forward to analyze the optimized behaviors of each node in a general graph.

\section{Conclusions}\label{sec:conclusion}

In this paper, we proposed a joint optimization framework in the multi-FN, multi-DSO and multi-DSS scenario for IoT fog computing. In the framework, we first modeled the Stackelberg games to solve the pricing problem of the DSOs and resource purchasing problem of the DSSs. Then a many-to-many matching was proposed between the DSOs and the FNs to deal with the DSO-FN pairing problem. Finally, we applied another many-to-many matching between its paired FNs and serving DSSs to solve the FN-DSS pairing problem within the same DSO. For each stage of the problem, all participants were able to achieve the equilibrium or stable results where no one was able to change its behavior unilaterally for a higher utility. Simulation results showed that all FNs, DSOs and DCOs were able to reach optimal utilities for themselves, and high performance of the proposed framework could be achieved compared with the data services without fog nodes.

\end{document}